\newcommand{\R}{\ensuremath{\mathbb{R}}}
\newcommand{\N}{\ensuremath{\mathbb{N}}}
\newcommand{\C}{\ensuremath{\mathbb{C}}}
\newcommand{\cpe}{\mathrm{EPC}}
\newcommand{\cce}{\mathrm{ECC}}
\newcommand{\cme}{\mathrm{EMC}}
\newcommand{\defi}{\overset{\underset{\mathrm{def}}{}}{=}}
\newcommand\captionof[1]{\def\@captype{#1}\caption}
\newcolumntype{L}{>{\displaystyle}l}
\newcolumntype{C}{>{\displaystyle}c}
\newcolumntype{R}{>{\displaystyle}r}
\renewcommand{\arraystretch}{1.5}
\newtheorem {theorem} {Theorem} 
\newtheorem {lemma} [theorem] {Lemma}
\begin{document}

\author[I. G. Cesca and D.D. Novaes]
{Igor G. Cesca$^1$ and Douglas D. Novaes$^2$}

\address{$^1$ Departamento de Engenharia do Petr\'{o}leo, Universidade
Estadual de Campinas, Caixa Postal 6052, 13083--970, Campinas, SP,
Brazil} \email{igcesca@gmail.com}

\address{$^2$ Departamento de Matematica, Universidade
Estadual de Campinas, Caixa Postal 6065, 13083--859, Campinas, SP,
Brazil} \email{ddnovaes@gmail.com}

\title[Physical assets replacement: an analytical approach]
{Physical assets replacement: an analytical approach}


\keywords{physical assets, replacement problem, economic life, non--smooth analysis}

\maketitle

\begin{abstract}
The economic life of an asset is the optimum length of its usefulness, which is the moment that the asset's expenses are minimum. In this paper, the economic life of physical assets, such as industry machine and equipment, can be interpreted as the moment that the minimum is reached by its equivalent property cost function, defined as the sum of all equivalent capital and maintenance costs during its life.

Many authors in classical papers have used principles of engineering economic to solve the assets replacement problem. However, in the literature, the main attributes found were proved with intuitive ideas instead mathematical analysis. Therefore, in this paper the main goal is to study these principles of engineering economic with mathematical techniques. 

Here, is used non-smooth analysis to classify all the possibilities for the minimum of a class of equivalent property cost functions of assets. The minimum of these function gives the optimum moment for the asset to be replaced, i.e., its economic life. 
\end{abstract}

\section{Introduction to physical assets replacement}\label{s1}

Physical assets, such as industry equipment, are vulnerable to devaluation and obsolescence. Among the many consequences of devaluation, there are, for instance, output decline, operation and maintenance expenses increase. These causes can lead to market value declination (see Park and Sharpe-Bette in \cite{PS}). Regarding asset's obsolescence, the main causes are technological innovation and change inside the company organization. Therefore, the asset is no longer needed (see Park in \cite{P}).

\smallskip

Nevertheless, with the proper maintenance, physical assets can be used for much more time than its physical nature allow. For example, it is possible to see vintage cars driving on the streets. However, to make it possible, the companies must be willing to pay a higher price.

\smallskip

Along with that comes the concept of {\it Economic Life}. An asset's economic life is the length of its usefulness, in a way that the expenses, i.e., annually sum of the maintenance costs and capital costs, are minimum. Therefore, the economic life of the asset is the optimum moment to replace the asset.

\smallskip

Because of that, if the asset is kept longer than its economic life, the expenses of maintenance will have increase a lot. Meanwhile, if the asset is replaced before its economic life, the capital cost will not have been fully fiscal depreciated. Therefore, part of the investment, in the acquisition cost of the asset will be lost. So, physical assets in general are always used for a limited time.

\section{Objectives}\label{s2}

Many authors, in classical papers, like Alchian in \cite{A}; Park in \cite{P}; Park and Sharp-Bette in \cite{PS}; Grant, et al in \cite{GIL}; and Thuesen, et al in \cite{TFT}, have used principles of engineering economic, as net present value and annuity equivalent, in the assets replacement problem.

\smallskip

However, in the literature, the main attributes found were proved with intuitive ideas and with no mathematical accuracy. Therefore, in this paper the main goal is to study these principles of engineering economic with mathematical analysis.

\section{Mathematical modeling}\label{s3}

The concept of economic life can be modeled with the following equations:

\subsection[Equivalent Capital Cost]{Equivalent Capital Cost\protect\footnote{The {\it Capital Cost Acquisition} is the cash outflow regarding asset's purchase. This cost also include shipping cost, installation cost and training cost. Besides, this is also called first cost or investment cost, because this kind of expense is the cost of getting an activity or project started (see Fabrycky and Blanchard in \cite{FB}, p. 22).}}

Two components are involved in the evaluation of capital cost: {\it Acquisition Cost} $A>0$, and {\it Salvage Value} $R(t)$. The first is a fixed value, while the second changes along the time. The result obtained from the difference between the annual equivalent and each component is the equivalent capital cost (see Appendix A), as showed in equation \eqref{cce}.

\begin{equation}\label{cce}
\cce\defi g(t)=\dfrac{e^r-1}{e^{r t}-1} \left(A e^{r t}-R(t)\right).
\end{equation}

Here, $0<r\leq1$ is the nominal interest rate. It is usual to see the value around $10\%$.

\subsection[Equivalent Maintenance Cost]{Equivalent Maintenance Cost\protect\footnote{On this category the main costs are due to cleaning. lubrication, components adjustment and repair, among others (see Hastings in \cite{H}).}}

To compute the equivalent maintenance cost, firstly it is necessary to evaluate the sum of present value of the series of {\it Maintenance Costs} $M(t)$ along the time. Then compute the present value in annually equivalents, as it is showed in equation \eqref{cme}.

\begin{equation}\label{cme}
\cme\defi f(t)=\dfrac{e^r-1}{e^{r t}-1} e^{r t} \int_0^t M(s) e^{-r s} \, ds.
\end{equation}

\subsection{Equivalent Property Cost}

The equivalent property cost of an asset is the sum of all capital and maintenance equivalent costs during its economic life, as it can be seen in equation \eqref{cpe}.

\begin{equation}\label{cpe}
\cpe\defi h(t)=f(t)+g(t).
\end{equation}

It is possible to see in Figure \ref{curve} the behavior of the functions \eqref{cce}, \eqref{cme}, and \eqref{cpe} along the time.

\smallskip

\begin{figure}[h]
\includegraphics[height=6.5cm]{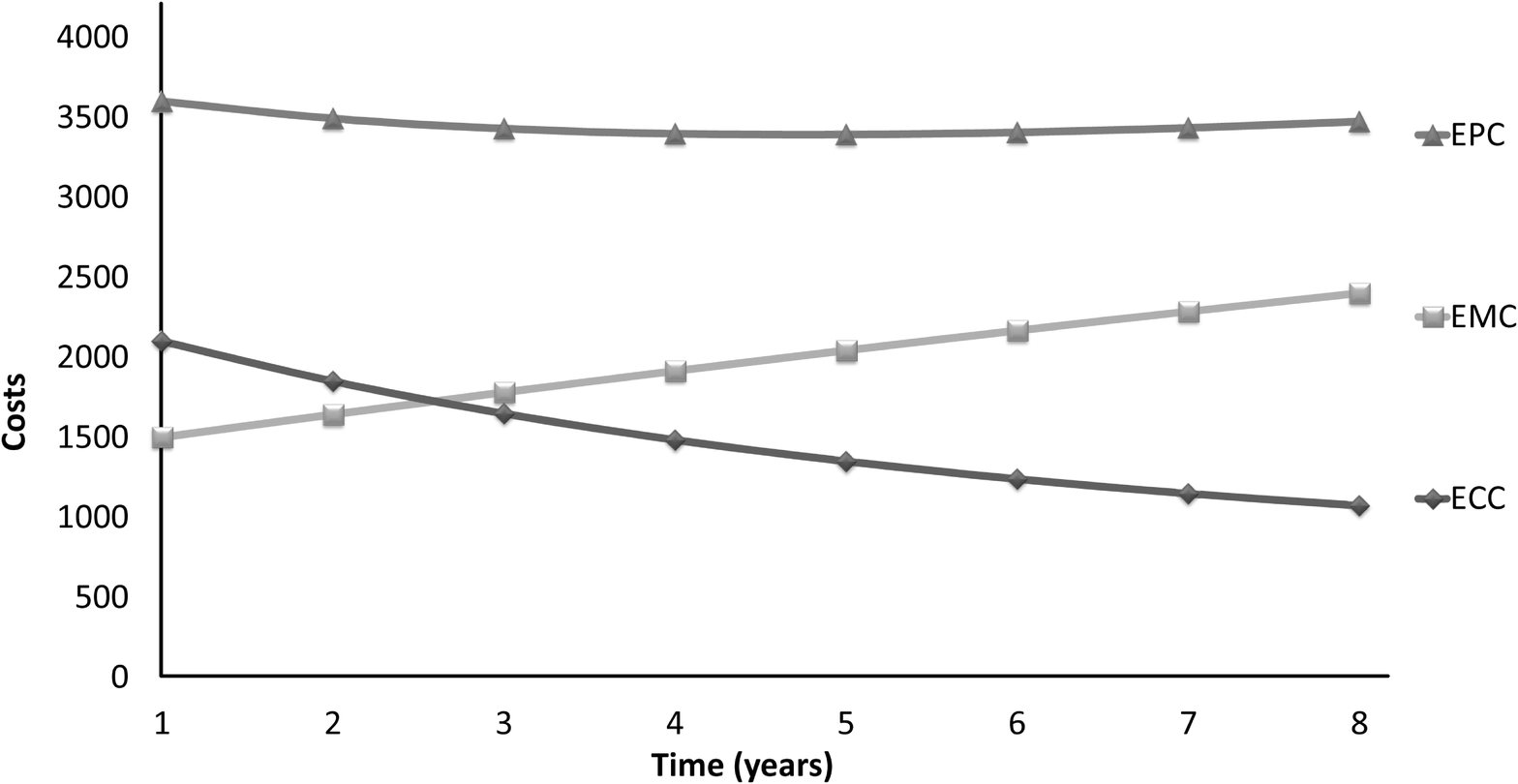}
\vskip 0cm \centerline{} \caption{\small \label{curve} Model to estimate the economic life.}
\end{figure}

\smallskip

The Figure \ref{curve} is representing the costs ($y$--axis) of the functions, \eqref{cce}, \eqref{cme} and \eqref{cpe}, along the time ($x$--axis). It can be seen that the equivalent maintenance costs increase, while the equivalent capital cost decrease. Therefore, the sum of both variables, i.e., the equivalent property cost, reaches a minimum point, which is the asset's economic life. In Figure \ref{curve}, the minimum point is reached at year five. So, it is the optimum moment for the asset be kept.

\smallskip

On balance, the consequences of a late replacement imply in high expenditures of operational and maintenance costs. Besides, there is also the opportunity cost of loss of asset's market value. On the other hand, an early replacement imply in selling the asset before its capital recovery .

\section{Physical Asset Model}\label{s4}

The maintenance costs function $M(t)$ is usually taken as a linear regression from a data collect regarding maintenance costs within the company.
Therefore, in many situations, it can be accepted that $M(t)$ is linear increasing function (see \eqref{M}), since the maintenance cost are higher each year.

\begin{equation}\label{M}
M(t)=at,
\end{equation}
with $a>0$.

\smallskip

The salvage value function $R(t)$ is the market value of the asset. It can be expected that $R(t)$ is decreasing, since physical assets are vulnerable to devaluation. However, the market value is always positive. So, given $A$, the acquisition cost of the asset, it is possible to assume, in many situations, that the asset value is depreciated every year by a tax $b$, i.e., $R(t)= A-bt$, until its fully depreciate at $t=A/b$. Thereafter, the asset value is constant equal zero (see \eqref{R}).

\begin{equation}\label{R}
R(t)=\left\{
\begin{array}{LCC}
A-bt&\textrm{if}& 0<t<A/b,\\
0&\textrm{if}& t\geq A/b,\\
\end{array}\right.
\end{equation}
with $A>0$ and $b>0$.

\smallskip

With the assumptions above, the equivalent property cost function \eqref{cpe} becomes
\begin{equation}\label{cpe1}
\renewcommand{\arraystretch}{2}
h(t)=\left\{
\begin{array}{LCC}
\dfrac{\left(e^r-1\right)}{r^2}\left(\frac{r t (b r-a)}{e^{r t}-1}+a+A r^2\right)&\textrm{if}& 0<t<A/b,\\
\dfrac{\left(-1+e^r\right)}{\left(-1+e^{r t}\right) r^2} \left(e^{r t} \left(a+A r^2\right)-a (1+r t)\right)&\textrm{if}& t\geq A/b,\\
\end{array}\right.
\end{equation}
and its derivative, for $t\neq A/b$, is given by
\begin{equation}\label{Dcpe1}
\renewcommand{\arraystretch}{2}
h'(t)=
\left\{
\begin{array}{LCC}
\dfrac{\left(e^r-1\right)}{r \left(e^{r t}-1\right)^2} \left(e^{r t} (r t-1)+1\right) (a-b r)&\textrm{if}& 0<t<A/b,\\
\dfrac{\left(e^r-1\right)}{r \left(e^{r t}-1\right)^2} \left(a-e^{r t} \left(-a r t+a+A r^2\right)\right)&\textrm{if}& t>A/b.
\end{array}\right.
\end{equation}

\section{Statements of the main results}

For $u>0$, define the function
\[
T(u)=1+u+W_0\left(-e^{-1-u}\right),
\]
where $W_0$ represents the main branch of the multivalued {\it Lambert W--Function} (see Appendix B).

\smallskip

Our main results, that classify all the possibilities for the minimum of the function $h(t)$, are the following:

\begin{theorem}\label{t1} If $c=Ar^2/a$, then we have the following possibilities for the minimum of the function $h(t)$:
\begin{itemize}
\item [($C_1$)] if
\[
\begin{array}{CCC}
a>b\,r &\textrm{and}& a\geq\dfrac{Abr^2}{Ar+b\left(e^{-\frac{Ar}{b}}-1\right)},
\end{array}
\]
then the minimum of the function $h(t)$, for $t\geq0$, is uniquely reached at $t=0$.

\smallskip

\item [($C_2$)] if
\[
\begin{array}{CCC}
a=b\,r &\textrm{and}& a\geq\dfrac{Abr^2}{Ar+b\left(e^{-\frac{Ar}{b}}-1\right)},
\end{array}
\]
then the minimum of the function $h(t)$, for $t\geq0$, is reached at all $t\in[0,A/b]$.

\smallskip

\item [($C_3$)] if
\[
\begin{array}{CCC}
a<b\,r &\textrm{and}& a\geq\dfrac{Abr^2}{Ar+b\left(e^{-\frac{Ar}{b}}-1\right)},
\end{array}
\]
then the minimum of the function $h(t)$, for $t\geq0$, is uniquely reached at $t=A/b$.

\smallskip

\item [($C_4$)] if
\[
\begin{array}{CCC}
a>b\,r &\textrm{and}& a<\dfrac{Abr^2}{Ar+b\left(e^{-\frac{Ar}{b}}-1\right)},
\end{array}
\]
then the minimum of the function $h(t)$, for $t\geq0$, is reached at $t=0$, or $t=T(c)/r$.

\smallskip

\item [($C_5$)] if
\[
\begin{array}{CCC}
a\leq b\,r &\textrm{and}& a<\dfrac{Abr^2}{Ar+b\left(e^{-\frac{Ar}{b}}-1\right)},
\end{array}
\]
then the minimum of the function $h(t)$, for $t\geq0$, is uniquely reached at $t=T(c)/r$.
\end{itemize}
\end{theorem}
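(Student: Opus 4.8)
The plan is to reduce the problem to a sign analysis of $h'(t)$ on each of the two intervals $(0,A/b)$ and $(A/b,\infty)$, exploiting that $h$ is continuous on $[0,\infty)$ and differentiable everywhere except at the single corner $t=A/b$ (inherited from the corner of $R(t)$). A direct check shows that the two branches in \eqref{cpe1} agree at $t=A/b$, so $h$ is continuous there. Since the prefactor $(e^r-1)/\big(r(e^{rt}-1)^2\big)$ appearing in \eqref{Dcpe1} is strictly positive for $t>0$, on each interval the sign of $h'$ is governed by the remaining factor; the monotonicity of $h$ on each piece, glued at $t=A/b$ by continuity, will then pin down the location of the global minimum.

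First I would treat the inner interval. Writing $\f(t)=e^{rt}(rt-1)+1$, one computes $\f(0)=0$ and $\f'(t)=r^2t\,e^{rt}>0$, so $\f(t)>0$ for all $t>0$; hence on $(0,A/b)$ the sign of $h'(t)$ equals that of $(a-br)$, and $h$ is strictly increasing, constant, or strictly decreasing on $[0,A/b]$ according to whether $a>br$, $a=br$, or $a<br$. For the outer interval I would set $w=rt$ and study $\eta(w)=1-e^{w}(1-w+c)$, which up to the positive factor $a$ is the relevant factor of $h'$; here $\eta(0)=-c<0$, $\eta'(w)=e^{w}(w-c)$, and $\eta(w)\to+\infty$, so $\eta$ decreases on $(0,c)$, increases on $(c,\infty)$, and has a unique positive zero lying in $(c,1+c)$. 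Solving $\eta(w)=0$, equivalently $e^{-w}=1+c-w$, by the substitution $v=w-(1+c)$ turns it into $v\,e^{v}=-e^{-1-c}$, whose principal‑branch solution $v=W_0(-e^{-1-c})$ yields exactly $w=T(c)$; therefore on $(A/b,\infty)$ the function $h$ decreases for $t<T(c)/r$ and increases for $t>T(c)/r$.

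Next I would show that the threshold inequality is precisely the comparison between $A/b$ and $T(c)/r$. Setting $x=rA/b$, an elementary rearrangement identifies $a\geq Abr^2/\big(Ar+b(e^{-Ar/b}-1)\big)$ with $x+e^{-x}\geq 1+c$, i.e. with $\eta(x)\geq 0$, which by the monotonicity of $\eta$ is equivalent to $x\geq T(c)$, that is $A/b\geq T(c)/r$. Hence the threshold holding means the critical abscissa $T(c)/r$ lies at or before the corner, so that $\eta(rt)>0$ and $h$ is strictly increasing throughout $(A/b,\infty)$; the threshold failing means $T(c)/r>A/b$, so that $h$ attains an interior minimum at $T(c)/r$ on the outer piece.

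Finally I would assemble the five cases. When the threshold holds, $h$ is nondecreasing on $(A/b,\infty)$, and the inner monotonicity gives $C_1$ (increasing–increasing, minimum at $0$), $C_2$ (constant–increasing, minimum on $[0,A/b]$), and $C_3$ (decreasing–increasing, minimum at $A/b$). When the threshold fails, $h$ dips to $T(c)/r$ on the outer piece, giving $C_5$ for $a\leq br$ (the inner piece is nonincreasing, so $h$ decreases all the way to the unique minimizer $T(c)/r$) and $C_4$ for $a>br$ (the inner piece rises to a local minimum at $0$ while the outer piece has a local minimum at $T(c)/r$, so the global minimum is whichever of $h(0)$ and $h(T(c)/r)$ is smaller). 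I expect the main obstacle to be the outer‑interval analysis: correctly solving the transcendental equation $\eta(w)=0$ and certifying that the principal Lambert branch $W_0$ selects the single relevant root $T(c)\in(c,1+c)$, together with the careful bookkeeping at the non‑smooth corner $t=A/b$ needed to justify gluing the two monotone pieces.
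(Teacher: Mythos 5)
Your proposal is correct, and its skeleton coincides with the paper's: the same splitting into the inner interval $(0,A/b)$ and the outer interval $(A/b,\infty)$ (the paper's Lemmas \ref{l1} and \ref{l2}), the same removal of the positive prefactor in \eqref{Dcpe1} so that only a transcendental factor governs the sign of $h'$, the same identification of the outer critical point $T(c)/r$ via the Lambert function, the same recognition that the threshold inequality on $a$ is exactly the comparison of $A/b$ with $T(c)/r$, and the same five-case assembly. Where you genuinely differ is that you replace the paper's two appeals to \emph{Mathematica 8.0} by analysis: the paper uses software both to produce the root $T(c)$ of $p(\tau)=c$ and to check that $T(u)>0$ for $u>0$, whereas you solve $\eta(w)=0$ by hand (the substitution $v=w-(1+c)$ turns it into $ve^v=-e^{-1-c}$, hence $w=1+c+W_0\left(-e^{-1-c}\right)=T(c)$ by the definition of $W_0$) and you localize the root in $(c,1+c)$ via $\eta(c)=1-e^c<0$, $\eta(1+c)=1>0$ and $\eta'(w)=e^w(w-c)$, which simultaneously certifies that the principal branch $W_0$ (not $W_{-1}$) selects the unique positive root and that $T(c)>0$ without computation. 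Likewise your inner-interval argument ($\varphi(0)=0$, $\varphi'(t)=r^2te^{rt}>0$) proves the positivity of $1+e^{\tau}(\tau-1)$ for $\tau>0$, which the paper's Lemma \ref{l1} asserts without justification. Two small points of care: your phrase ``by the monotonicity of $\eta$'' is loose, since $\eta$ is not monotone on $(0,\infty)$ --- what you actually need, and have already established, is the sign pattern $\eta<0$ on $(0,T(c))$ and $\eta>0$ on $(T(c),\infty)$, which plays the role of the paper's genuinely increasing function $p(\tau)=\tau-1+e^{-\tau}$; and in case $C_5$ with $a=br$ the inner piece is constant rather than decreasing, though the strict decrease of $h$ on $(A/b,T(c)/r)$ still forces the minimum to be uniquely attained at $T(c)/r$, as you conclude.
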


\smallskip

The proof of Theorem \ref{t1} is given in Section 7. The next theorem divides the case $C_4$, of Theorem \ref{t1}, in three different cases, $C_4^1$, $C_4^2$ and $C_4^3$.  

\smallskip

\begin{theorem}\label{t2}
Considering the assumptions of the case $C_4$ of Theorem \ref{t1}, follows:
\begin{itemize}
\item[($C_4^1$)] If
\[
A>\log\left(1-\dfrac{br}{a}\right)^{-\frac{a}{r^2}}-\dfrac{b}{r},
\]
then the minimum of the function $h(t)$, for $t\geq0$, is uniquely reached at $t=0$;

\smallskip

\item[($C_4^2$)] if
\[
A=\log\left(1-\dfrac{br}{a}\right)^{-\frac{a}{r^2}}-\dfrac{b}{r},
\]
then the minimum of the function $h(t)$, for $t\geq0$, is reached at $t=0$ and $t=T(c)/r$;

\smallskip

\item[($C_4^3$)] if
\[
A<\log\left(1-\dfrac{br}{a}\right)^{-\frac{a}{r^2}}-\dfrac{b}{r},
\]
then the minimum of the function $h(t)$, for $t\geq0$, is uniquely reached at $t=T(c)/r$.
\end{itemize}
\end{theorem}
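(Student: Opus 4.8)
The plan is to reduce Theorem \ref{t2} to a single comparison between two explicit numbers. By case $C_4$ of Theorem \ref{t1} the global minimum of $h$ on $[0,\infty)$ is attained either at $t=0$ or at the interior critical point $t=T(c)/r$ (which lies in the region $t>A/b$); indeed, since $a>br$, the first branch of \eqref{Dcpe1} is positive, so $h$ is increasing on $[0,A/b]$ and $t=0$ is the only competitor coming from that branch. Theorem \ref{t2} therefore amounts to deciding, as a function of $A$, whether $h(0)$ or $h(T(c)/r)$ is smaller, the three subcases $C_4^1,C_4^2,C_4^3$ corresponding to the three outcomes of this comparison.

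First I would evaluate the two candidates. Letting $t\to0^+$ in the first branch of \eqref{cpe1} and using $rt/(e^{rt}-1)\to1$ gives
\[
h(0)=\frac{(e^r-1)(b+Ar)}{r}.
\]
For the second, put $\tau^*=T(c)=1+c+w$ with $w=W_0(-e^{-1-c})$ and $c=Ar^2/a$. Setting the second branch of \eqref{Dcpe1} equal to zero shows that the critical point satisfies $e^{-\tau^*}=(1+c)-\tau^*$; combined with $we^{w}=-e^{-1-c}$ this gives the two identities $e^{\tau^*}=-1/w$ and $(1+c)-\tau^*=-w$. Substituting them into the second branch of \eqref{cpe1}, using $a+Ar^2=a(1+c)$ and the factorization $-(1+c)-w(2+c)-w^2=-(1+w)(1+c+w)=-(1+w)\tau^*$, every Lambert--$W$ term cancels and one is left with
\[
h\!\left(\frac{T(c)}{r}\right)=\frac{(e^r-1)\,a}{r^2}\,T(c).
\]
This cancellation is the technical heart of the proof; once it is in hand everything else is elementary.

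The comparison now simplifies dramatically. Cancelling the positive factor $(e^r-1)/r^2$ and writing $a\,T(c)=a(1+c+w)=a+Ar^2+aw$, the inequality $h(0)\lessgtr h(T(c)/r)$ becomes $rb\lessgtr a(1+w)$, i.e.
\[
w\;\gtrless\;\frac{br}{a}-1.
\]
Set $v=\frac{br}{a}-1$; since $0<br<a$ in case $C_4$ one has $-1<v<0$. To pin down the threshold I would impose $w=v$ and invert via the principal-branch identity $v=W_0(ve^{v})$ (legitimate because $v\in(-1,0)$), turning $w=v$ into $-e^{-1-c}=ve^{v}$. This reads $\bigl(1-\tfrac{br}{a}\bigr)e^{br/a}=e^{-c}$, so $c=-\log\bigl(1-\tfrac{br}{a}\bigr)-\tfrac{br}{a}$, and substituting $c=Ar^2/a$ yields exactly
\[
A=A^*:=\log\!\Bigl(1-\frac{br}{a}\Bigr)^{-a/r^2}-\frac{b}{r},
\]
the threshold in the statement.

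The direction of the inequality is then fixed by a monotonicity argument. The map $c\mapsto-e^{-1-c}$ increases on $(0,\infty)$ into $(-1/e,0)$, where $W_0$ is increasing, so $w=W_0(-e^{-1-c})$ is strictly increasing in $c$, hence in $A$ via $c=Ar^2/a$. Therefore $w>v\iff A>A^*$, giving $h(0)<h(T(c)/r)$ and the minimum at $t=0$ (case $C_4^1$); $A=A^*$ gives $h(0)=h(T(c)/r)$, so both points minimize (case $C_4^2$); and $A<A^*$ gives $w<v$, hence the minimum at $t=T(c)/r$ (case $C_4^3$). The only real obstacle is the computation producing $h(T(c)/r)=(e^r-1)a\,T(c)/r^2$; after that the trichotomy is just one monotone dependence on $A$.
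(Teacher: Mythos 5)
Your proof is correct, and its skeleton coincides with the paper's: both reduce case $C_4$ (via Lemmas \ref{l1} and \ref{l2}) to the single comparison $h(0)\lessgtr h(T(c)/r)$, both use $h(0)=\tfrac{(e^r-1)(b+Ar)}{r}$ and $h(T(c)/r)=\tfrac{(e^r-1)\,a}{r^2}T(c)$ (the paper writes this value as $\tfrac{e^r-1}{r^2}\bigl(a+Ar^2+aW_0(-e^{-1-Ar^2/a})\bigr)$, which is the same number since $aT(c)=a+Ar^2+aW_0(-e^{-1-c})$), and both arrive at the pivotal inequality $\tfrac{br}{a}-1\lessgtr W_0(-e^{-1-c})$. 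You differ in two places, both to your credit. First, the paper simply asserts the value of $h(T(c)/r)$ (elsewhere it leans on \emph{Mathematica} for such identities), whereas you derive it by hand from the critical-point identities $e^{-\tau^*}=-w$ and $(1+c)-\tau^*=-w$ together with the factorization $(1+c)+w(2+c)+w^2=(1+w)(1+c+w)$; this makes the argument self-contained, and I checked that the algebra is right. Second, to resolve the inequality, the paper applies the increasing inverse $W_0^{-1}(w)=we^w$ to both sides and then takes logarithms, obtaining the condition on $A$ directly for case $C_4^1$ and declaring $C_4^2$ and $C_4^3$ ``completely analogous''; you instead solve the equality case $w=v$ to identify the threshold $A^*$ and invoke the strict monotonicity of $A\mapsto W_0(-e^{-1-Ar^2/a})$, which settles all three subcases in one stroke and makes the trichotomy manifestly exhaustive. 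The two resolutions rest on the same monotonicity of $W_0$ and are logically equivalent, so this is a variant of the paper's route rather than a new one, but it is tighter where the paper is informal.
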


\smallskip

The proof of Theorem \ref{t2} is given in Section 7. 

\smallskip

Resuming the results of Theorems \ref{t1} and \ref{t2}, it follows the classification:

\bigskip

\begin{itemize}
\item in the cases $C_1$ and $C_4^1$ (see Figure \ref{A1_41}), the minimum is uniquely reached at $t=0$;

\smallskip

\begin{figure}[h]
\psfrag{h0}{$h_0$}
\psfrag{h}{$h(t)$}
\psfrag{t}{$t$}
\psfrag{T}{$T(c)/r$}
\psfrag{A}{$A/b$}
\includegraphics[height=4cm]{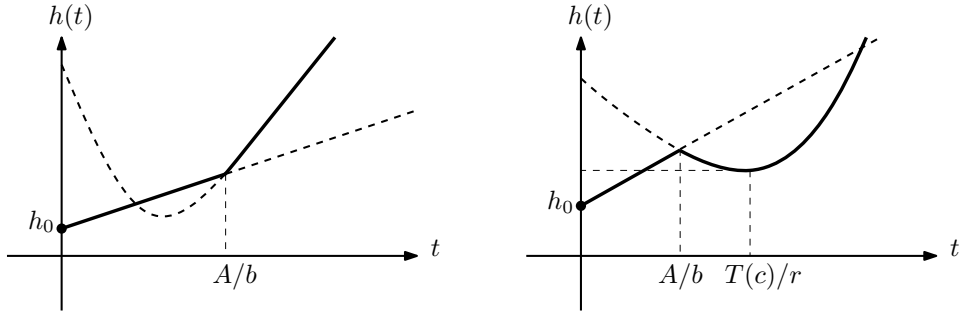}
\vskip 0cm \centerline{} \caption{\small \label{A1_41} Cases $C_1$ and $C_4^1$.}
\end{figure}

\smallskip

\item in the case $C_2$ (see Figure \ref{A2}), the minimum is reached at all $t\in[0\,,\,A/b]$;

\smallskip

\begin{figure}[h]
\psfrag{h0}{$h_0$}
\psfrag{h}{$h(t)$}
\psfrag{t}{$t$}
\psfrag{T}{$T(c)/r$}
\psfrag{A}{$A/b$}
\includegraphics[height=4cm]{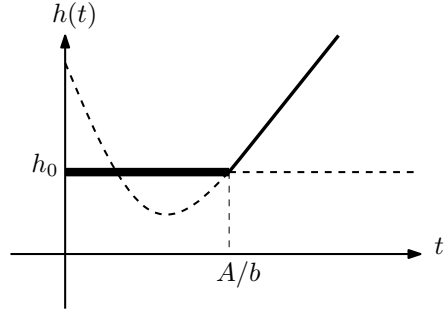}
\vskip 0cm \centerline{} \caption{\small \label{A2} Case $C_2$.}
\end{figure}

\smallskip

\item in the case $C_3$ (see Figure \ref{A3}), the minimum is uniquely reached at $t=A/b$;

\smallskip

\begin{figure}[h]
\psfrag{h0}{$h_0$}
\psfrag{h}{$h(t)$}
\psfrag{t}{$t$}
\psfrag{A}{$A/b$}
\includegraphics[height=4cm]{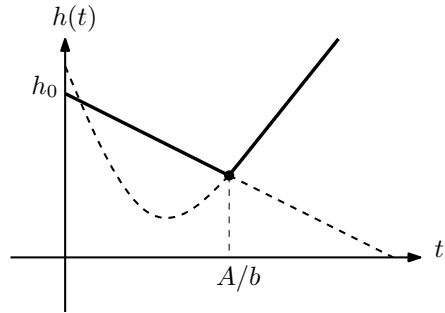}
\vskip 0cm \centerline{} \caption{\small \label{A3} Case $C_3$.}
\label{A3}
\end{figure}

\smallskip

\item in the case $C_4^2$ (see Figure \ref{A42}), the minimum is reached at two points $t=0$ and $t=T(c)/r$;

\smallskip

\begin{figure}[h]
\psfrag{h0}{$h_0$}
\psfrag{h}{$h(t)$}
\psfrag{t}{$t$}
\psfrag{T}{$T(c)/r$}
\psfrag{A}{$A/b$}
\includegraphics[height=4cm]{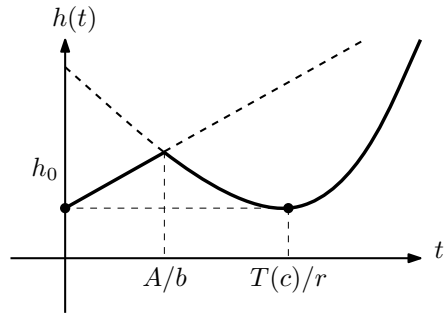}
\vskip 0cm \centerline{} \caption{\small \label{A42} Case $C_4^2$.}
\end{figure}

\smallskip

\item in the cases $C_4^3$ and $C_5$ (see Figure \ref{A43_5}), the minimum is uniquely reached at $t=T(c)/r$;

\smallskip

\begin{figure}[h]
\psfrag{h0}{$h_0$}
\psfrag{h}{$h(t)$}
\psfrag{t}{$t$}
\psfrag{T}{$T(c)/r$}
\psfrag{A}{$A/b$}
\includegraphics[height=4cm]{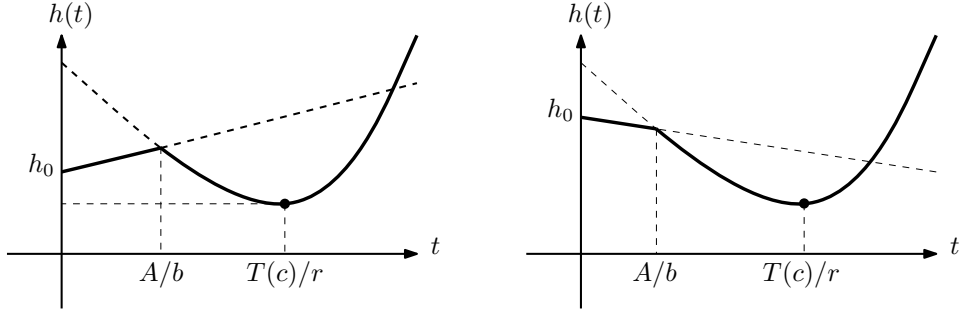}
\vskip 0cm \centerline{} \caption{\small \label{A43_5} Cases $C_4^3$ and $C_5$.}
\end{figure}

\end{itemize}


\section{Basic lemmas}

Before proving the Theorems \ref{t1} and \ref{t2}, we shall prove two lemmas about the critical points of the function \eqref{cpe1} on the intervals $I_1=(0,A/b)$ and $I_2=(A/b,+\infty)$. For this aim, we study the zeros of the derivative \eqref{Dcpe1}. Observe that the functions $h(t)$ and $h'(t)$, given respectively in \eqref{cpe1} and \eqref{Dcpe1}, restricted to the intervals $I_1$ and $I_2$, are smooth.

\begin{lemma}\label{l1}
If $I_1=(0,A/b)$, then $h'(t)\neq0$ for $t\in I_1$. Moreover, if $a>br$, then $h(t)$ is increasing in $I_1$; if $a<br$, then $h(t)$ is decreasing in $I_1$; and if $a=br$, then $h(t)$ is constant in $I_1$.
\end{lemma}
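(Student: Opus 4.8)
The plan is to read the sign of $h'(t)$ on $I_1$ straight off the first branch of \eqref{Dcpe1}, namely
\[
h'(t)=\frac{e^r-1}{r\,(e^{rt}-1)^2}\bigl(e^{rt}(rt-1)+1\bigr)(a-br).
\]
The leading factor $\dfrac{e^r-1}{r\,(e^{rt}-1)^2}$ is manifestly positive for every $t>0$, since $r>0$ forces $e^r>1$ and $(e^{rt}-1)^2>0$. Hence the sign of $h'(t)$ is controlled by the two remaining factors: the bracket $\phi(t):=e^{rt}(rt-1)+1$ and the constant $(a-br)$.

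The heart of the argument—indeed its only nontrivial step—is to show that $\phi(t)>0$ throughout $I_1$, in fact for all $t>0$. First I would substitute $x=rt$, reducing $\phi$ to $\psi(x):=e^{x}(x-1)+1$ on $x>0$. Then $\psi(0)=0$, and a direct differentiation gives $\psi'(x)=x\,e^{x}$, which is strictly positive for $x>0$. Therefore $\psi$ is strictly increasing on $(0,+\infty)$ and, starting from the value $\psi(0)=0$, stays strictly positive there. Translating back, $\phi(t)>0$ for every $t>0$, in particular on $I_1$.

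Once the positivity of $\phi$ is in hand, the conclusions follow at once. Since the product of the first two factors is strictly positive on $I_1$, the sign of $h'(t)$ equals $\sgn(a-br)$ for every $t\in I_1$. Consequently $h'(t)\neq0$ whenever $a\neq br$, so $h$ has no interior critical point on $I_1$; moreover $h'>0$ when $a>br$, giving $h$ increasing, and $h'<0$ when $a<br$, giving $h$ decreasing. In the borderline case $a=br$ the factor $(a-br)$ vanishes identically, so $h'\equiv0$ on $I_1$ and $h$ is constant there.

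The only delicate point is the positivity of $\phi$, and even that collapses to the single observation $\psi'(x)=x\,e^{x}>0$ together with $\psi(0)=0$; no finer estimates are required. Everything else is just a bookkeeping of signs among the explicit factors of \eqref{Dcpe1}.
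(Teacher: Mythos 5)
Your proposal is correct and takes essentially the same route as the paper: factor $h'$ on $I_1$ into the positive prefactor $\frac{e^r-1}{r(e^{rt}-1)^2}$, the bracket $e^{rt}(rt-1)+1$, and the constant $(a-br)$, then read the sign of $h'$ off the sign of $(a-br)$. If anything, you are more complete than the paper, which merely asserts that $1+e^{\tau}(\tau-1)$ vanishes only at $\tau=0$, whereas you actually prove the needed positivity via $\psi(0)=0$ and $\psi'(x)=x e^{x}>0$ for $x>0$.
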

\begin{proof}


For $t\in I_1$
\[
h'(t)=\dfrac{\left(e^r-1\right)}{r \left(e^{r t}-1\right)^2}p_1(rt)
\]
where $p_1(\tau)=(a-br) \left(1+e^{\tau} (\tau-1)\right)$. Thus for $t\in I_1$, we have that $h'(t)=0$ if and only if $p_1(rt)=0$.

On the other hand, for $a\neq br$, $p_1(\tau)=0$ if and only if $\tau=0$. Therefore, for $t\in I_1$ and $br\neq a$, $h'(t)=0$ if and only if $t=0$.

Now, observe that, for $t\in I_1$,
\[
h'(t)\left\{\begin{array}{CCC}
>0, &\textrm{if}& a>br,\\
=0, &\textrm{if}& a=br,\\
<0, &\textrm{if}& a<br.
\end{array}\right.
\]

Moreover, if $a=br$, then, for $t\in I_1$,
\[
h(t)=\dfrac{e^r-1}{r}\left(b+Ar\right)
\]
for $0<t<A/b$, which concludes the proof.
\end{proof}

\begin{lemma}\label{l2} For $t> A/b$ follows:
\begin{itemize}
\item[(i)]if
\[
a\geq\dfrac{Abr^2}{Ar+b\left(e^{-\frac{Ar}{b}}-1\right)},
\]
then $h'(t)\neq0$, for $t> A/b$; Moreover, $h(t)$ is increasing for $t> A/b$;

\smallskip

\item[(ii)]if
\[
a<\dfrac{Abr^2}{A+b\left(e^{-\frac{A}{b}}-1\right)},
\]
then $h'(t)=0$ for $t=T(c)/r$, where
\[
1+c+W_0\left(-e^{-1-c}\right),
\]
where $c=Ar^2/a$. Moreover, $h(t)$ is decreasing for $A/b<t<T(c)$; and $h(t)$ is increasing for $t>T(c)$;
\end{itemize}
\end{lemma}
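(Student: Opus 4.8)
The plan is to reduce the whole statement to a sign analysis of a single auxiliary function. For $t>A/b$, writing $\tau=rt$ and using $c=Ar^2/a$, the numerator of $h'(t)$ in \eqref{Dcpe1} factors as $a-e^{rt}\left(-art+a+Ar^2\right)=a\left(1-e^{\tau}(c+1-\tau)\right)$. Setting $q(\tau)=1-e^{\tau}(c+1-\tau)$, we therefore have $h'(t)=\dfrac{(e^r-1)\,a}{r\left(e^{rt}-1\right)^2}\,q(rt)$, and since the prefactor is strictly positive for $t>0$, it follows that $\sgn h'(t)=\sgn q(rt)$. The lemma then becomes a question about the sign of $q$ on the ray $\tau>rA/b$.

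First I would determine the shape of $q$. A direct computation gives $q'(\tau)=e^{\tau}(\tau-c)$, so $q$ is strictly decreasing on $(-\infty,c)$ and strictly increasing on $(c,+\infty)$, with a global minimum at $\tau=c$ where $q(c)=1-e^{c}<0$ (because $c=Ar^2/a>0$). Moreover $q(\tau)\to+\infty$ as $\tau\to+\infty$, so $q$ has exactly one zero on $(c,+\infty)$. Solving $e^{\tau}(c+1-\tau)=1$ via the substitution $w=c+1-\tau$, which turns it into $we^{-w}=e^{-1-c}$, identifies this zero, through the principal branch of the Lambert $W$ function, as $\tau=1+c+W_0\!\left(-e^{-1-c}\right)=T(c)$, that is, the critical point $t=T(c)/r$. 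Since $W_0\!\left(-e^{-1-c}\right)\in(-1,0)$ for $c>0$, we also obtain $c<T(c)<c+1$.

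Next I would show that the dividing threshold between (i) and (ii) is exactly the sign of $q$ at the left endpoint $\tau=rA/b$, i.e. at $t=A/b$. Evaluating, $q(rA/b)\ge0$ is equivalent to $e^{-rA/b}\ge c+1-rA/b$; since $e^{-rA/b}-1+rA/b>0$ one may clear denominators and rearrange this into $a\ge\dfrac{Abr^2}{Ar+b\left(e^{-\frac{Ar}{b}}-1\right)}$, which is precisely the hypothesis of (i). A short estimate ($1-e^{-x}<x$ for $x>0$) shows the threshold strictly exceeds $br$, so under (i) we have $a>br$, hence $c<rA/b$; then $rA/b$ lies on the increasing branch of $q$, and $q(\tau)>q(rA/b)\ge0$ for every $\tau>rA/b$, giving $h'>0$ on $I_2$. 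Under (ii) we have $q(rA/b)<0$, and I would split into the three cases $c<rA/b$, $c=rA/b$, $c>rA/b$ (equivalently $a>br$, $a=br$, $a<br$): in each, the monotonicity of $q$ together with $q(rA/b)<0$, $q(c)<0$ and $q(T(c))=0$ forces $q<0$ on $(rA/b,T(c))$ and $q>0$ on $(T(c),+\infty)$, which translates into the claimed decreasing/increasing behaviour of $h$ with minimum at $t=T(c)/r$.

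The main obstacle I expect is bookkeeping rather than conceptual: carefully verifying the equivalence between the algebraic threshold and the sign of $q(rA/b)$, and confirming in case (ii) that $T(c)>rA/b$, so that $t=T(c)/r$ genuinely lies in $I_2=(A/b,+\infty)$. The latter is the delicate point when $c<rA/b$ (i.e. $a>br$), where one must use $q(rA/b)<0=q(T(c))$ on the increasing branch to deduce $rA/b<T(c)$. The Lambert $W$ identification, while routine, also requires care to select the principal branch $W_0$ (the root with $\tau>c$) rather than the spurious root lying on the decreasing branch of $q$.
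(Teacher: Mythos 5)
Your proposal is correct, and it follows the same overall strategy as the paper --- reduce everything to the sign of the numerator of $h'$ on $t>A/b$ and identify its unique relevant zero via the Lambert $W$ function --- but your execution differs in ways worth spelling out. The paper normalizes differently: it rewrites $p_2(rt)=0$ as $p(rt)=c$ with $p(\tau)=\tau-1+e^{-\tau}$, notes $p$ is increasing on $(0,\infty)$, and compares $c$ against the endpoint value $p(rA/b)$; since $q(\tau)=e^{\tau}\bigl(p(\tau)-c\bigr)$, this is the same sign analysis as yours, and the paper's normalization avoids your three-way split on whether $c$ is smaller than, equal to, or larger than $rA/b$, because $p$ is monotone on all of $(0,\infty)$ while your $q$ dips at $\tau=c$. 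What your route buys is substantial, though. First, the paper obtains both the identity $p(T(c))=c$ and the inequality $T(u)>0$ by invoking Mathematica 8.0; your substitution $w=c+1-\tau$, which turns the equation into $we^{-w}=e^{-1-c}$ and selects the root with $\tau>c$ (equivalently the branch with values in $(-1,0)$, i.e.\ $W_0$), replaces the software verification by an actual proof, and your bound $c<T(c)<c+1$ is sharper than $T(c)>0$ --- either suffices, combined with monotonicity, to conclude $T(c)/r>A/b$. Second, in case (i) the paper's proof literally establishes only $h'(t)\neq0$ and never verifies the asserted conclusion that $h$ is increasing; your chain --- the threshold exceeds $br$, hence $a>br$, hence $rA/b>c$ lies on the increasing branch of $q$, hence $q(\tau)>q(rA/b)\geq0$ for $\tau>rA/b$ --- closes that gap. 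You also tacitly work with the correct threshold $\frac{Abr^2}{Ar+b\left(e^{-\frac{Ar}{b}}-1\right)}$ throughout, whereas the statement of item (ii) in the paper (and one line of its proof) drops factors of $r$; your reading is the consistent one, matching item (i) and Theorem \ref{t1}.
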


\begin{proof}
For $t >A/b$, 
\[
h'(t)=\dfrac{\left(e^r-1\right)}{r \left(e^{r t}-1\right)^2}p_2(rt)
\]
where $p_2(\tau)=a-e^{\tau} (a+Ar^2-a\tau)$. Thus for $t >A/b$, we have that $h'(t)=0$ if and only if $p_2(r\,t)=0$.

On the other hand, $p_2(r\,t)=0$ if and only if $p(r\,t)=c$, where
\[
\begin{array}{CCC}
p(\tau)=\dfrac{\tau e^\tau-e^{\tau}+1}{e^{\tau}} & \textrm{and} & c=\dfrac{Ar^2}{a}.
\end{array}
\]

Note that the function $p(\tau)$ is increasing for $\tau>0$. Indeed $p'(\tau)=1-\cosh(\tau)+\sinh(\tau)$. Thus if $p(r\,t_0)=c$, then $t_0>A/b$ if and only if
\[
\begin{array}{CCCCCC}
p(r\,t_0)>p\left(r\,\dfrac{A}{b}\right)&\Leftrightarrow& \dfrac{Ar^2}{a}>\dfrac{A r}{b}+e^{-\frac{A r}{b}}-1 &\Leftrightarrow & a<\dfrac{Abr^2}{A+b\left(e^{-\frac{A}{b}}-1\right)}.
\end{array}
\]

Hence if 
\[
a\geq\dfrac{Abr^2}{Ar+b\left(e^{-\frac{Ar}{b}}-1\right)},
\]
then $h'(t)\neq 0$ for $t>A/b$.

Now, assume that
\[
a<\dfrac{Abr^2}{Ar+b\left(e^{-\frac{Ar}{b}}-1\right)}.
\]

Using the software {\it Mathematica 8.0}, we obtain that: if
\[
T(u)=1+u+W_0\left(-e^{-1-u}\right),
\]
where $W_0$ represents the main branch of the multivalued {\it Lambert W--Function} (see Appendix B), then
\[
p\left(r\dfrac{T(c)}{r}\right)=c.
\]

Again, using the software {\it Mathematica 8.0}, we can check that $T(u)>0$ for $u>0$, which implies, as we have seen, that $T(c)/r>A/b$. 

Now, using the fact that the function $p(\tau)$ is increasing for $\tau>0$, we have, in this case, that $t=T(c)/r$ is the unique critical point of the function $h(t)$ for $t>A/b$. Moreover, is easy to see that $p(r\,t)<c$ for $t<T(c)/r$, and $p(r\,t)>c$ for $t>T(c)/r$. Thus the derivative $h'(t)$ is negative for $A/b<t<T(c)/r$, and is positive for $t>T(c)/r$. Hence we can conclude that $h(t)$ is decreasing for $A/b<t<T(c)/r$; and $h(t)$ is increasing for $t>T(c)/r$; which concludes the proof.
\end{proof}

\section{Proofs of Theorems \ref{t1} and \ref{t2}}

Now, we are ready to prove the Theorems \ref{t1} and \ref{t2}

\begin{proof}[Proof of Theorem \ref{t1}]

The function \eqref{cpe1} is continuous for $t\geq 0$ and piecewise differentiable. Thus we can find the minimum of \eqref{cpe1} comparing the minimums of each piecewise. In other words, we have to choose the minimum among the values of the function evaluated at the critical points of each piece and at the extreme points $t=0$ and $t=A/b$.

\smallskip

In the Case $C_1$, by Lemmas \ref{l1} and \ref{l2}, the function $h(t)$ is increasing for $t\geq 0$. Hence the minimum is uniquely reached at $t=0$.

\smallskip

In the Case $C_2$, by Lemma \ref{l1}, the function $h(t)$ is constant for $0\leq t<\leq A/b$; and by Lemma \ref{l2}, the function $h(t)$ is increasing for $t\geq A/b$. Hence the minimum is reached at all $t\in\left[0,A/b\right]$.

\smallskip

In the Case $C_3$, by Lemma \ref{l1}, the function $h(t)$ is decreasing for $0\leq t<\leq A/b$; and by Lemma \ref{l2}, the function $h(t)$ is increasing for $t\leq A/b$. Hence the minimum is uniquely reached at $t=A/b$.

\smallskip

In the Case $C_4$, by Lemma \ref{l1}, the function $h(t)$ is increasing for $0\leq t\leq A/b$, thus the minimum for $0<t<A/b$ is reached at $t=0$. On the other hand, by Lemma \ref{l2} there is a local minimum in $t=T(c)/r>A/b$. Moreover, by Lemma \ref{l2}, the function $h(t)$ is increasing for $t\geq T(c)/r$. Hence the minimum is reached at $t=0$, or $t=T(c)/r$.

\smallskip

In the Case $C_5$, by Lemma \ref{l1}, the function $h(t)$ is decreasing for $0\leq t\leq A/b$, and by Lemma \ref{l2}, there is a local minimum in $t=T(c)/r>A/b$. Moreover, by Lemma \ref{l2}, the function $h(t)$ is increasing for $t\geq T(c)/r$. Hence the minimum is uniquely reached at $t=T(c)/r$.

\end{proof}

\begin{proof}[Proof of Theorem \ref{t2}]

Consider the assumptions of the case $C_4^1$. It is clear that the minimum of the function $h(t)$, for $t\geq0$, is reached at $t=0$ if and only if $h(0)<h(T(c)/r)$. We know that
\[
\begin{array}{CCC}
h(0)=\dfrac{e^r-1}{r}\left(Ar+b\right)& \textrm{and} & h\left(\dfrac{T(c)}{r}\right)=\dfrac{e^r-1}{r^2}\left(a+Ar^2+aW_0\left(-e^{-1-\frac{Ar^2}{a}}\right)\right).
\end{array}
\]

Thus $h(0)<h(T(c)/r)$ if and only if
\begin{equation}\label{i1}
-1<\dfrac{br}{a}-1<W_0\left(-e^{-1-\frac{Ar^2}{a}}\right).
\end{equation}

The function $W_0:(-1/e,\infty)\longrightarrow(-1,\infty)$ is invertible and $W_0^{-1}:(-1,\infty)\longrightarrow(-1/e,\infty)$ is an increasing function (see Appendix B). Then we can apply the function $W_0^{-1}$ in both sides of the inequality \eqref{i1}, since $br/a-1>-1$, to obtain the equivalent inequality

\begin{equation}\label{i2}
\left(1-\dfrac{br}{a}\right)e^{\dfrac{br}{a}-1}>e^{-1-\frac{Ar^2}{a}}>0.
\end{equation}

Now, applying the $\log$ function in both sides of the inequality \eqref{i2}, we can conclude that $h(0)<h(T(c)/r)$ if and only if
\[
A>\log\left(1-\dfrac{br}{a}\right)^{-\frac{a}{r^2}}-\dfrac{b}{r}.
\]

Hence we conclude the proof of theorem for the case $C_4^1$. The proofs for the cases $C_4^2$ and $C_4^3$, are completely analogous.

\end{proof}

\section{Conclusion and future directions}

The methodology defined in the classical literature, assets replacement problem involve solving it organizing and analyzing tables and equations, as it can be verified. However, this involve too much work and effort. 
So, for assets with the characteristics described in Section \ref{s4} it can be applied the methodology described in this paper. With the analytical approach, it takes less time and effort to see the when is the economic life and to replace the asset. Following this article approach, through Theorem \ref{t1} and Theorem \ref{t2} results, it can simply solve the assets replacement problem.

\medskip

The function $M(t)$ in the expression \eqref{cme} comes from a linear regression. Thus in a more general case, $M(t)$ could be a piecewise function composed by affine and exponential translated functions. For instance (see Figure \ref{fM}), we can take
\begin{equation}\label{M1}
M(t)=\sum_{i\in N}\chi_{I_i}(t)M_i(t),
\end{equation}
where, $N$ is a collection of natural indexes; $(I_i)_{i\in\N}$ is a sequence of disjoint real intervals such that covers the real line $\R$; for each $i\in N$, $M_i$ is an affine, or exponential translated function; and $\chi_I:\R\rightarrow \{0,1\}$ is the characteristic function, i.e., if $I$ is a subset of $\R$, then
\[
\chi_I(t)=\left\{\begin{array}{LCL}
1,&\textrm{if}&t\in I,\\
0,&\textrm{if}&t\notin I.
\end{array}\right.
\]

We intend, in future works, deal with the problem of assets replacement considering maintenance costs functions as given in \eqref{M1}.

\begin{figure}[h]
\psfrag{M}{$M(t)$}
\psfrag{M1}{$M_1$}
\psfrag{M2}{$M_2$}
\psfrag{t}{$t$}
\psfrag{I1}{$I_1$}
\psfrag{I2}{$I_2$}
\psfrag{P}{$\cdots$}
\includegraphics[height=5cm]{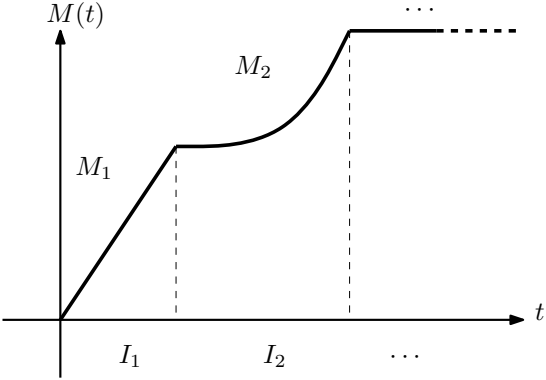}
\vskip 0cm \centerline{} \caption{\small \label{fM} Piecewise function $M(t)$.}
\end{figure}

\section*{Appendix A: Financial Criterion Equivalence}\label{A}
Given a series of irregular cash flow from $t=1$ to $N$ , we want to make it a regular cash flow along the years. The scenario is possible to see in Figure \ref{Cash} below.

\smallskip

\begin{figure}[h]
\psfrag{1}{1}
\psfrag{2}{2}
\psfrag{3}{3}
\psfrag{4}{4}
\psfrag{N-1}{$N-1$}
\psfrag{N}{$N$}
\includegraphics[width=8cm]{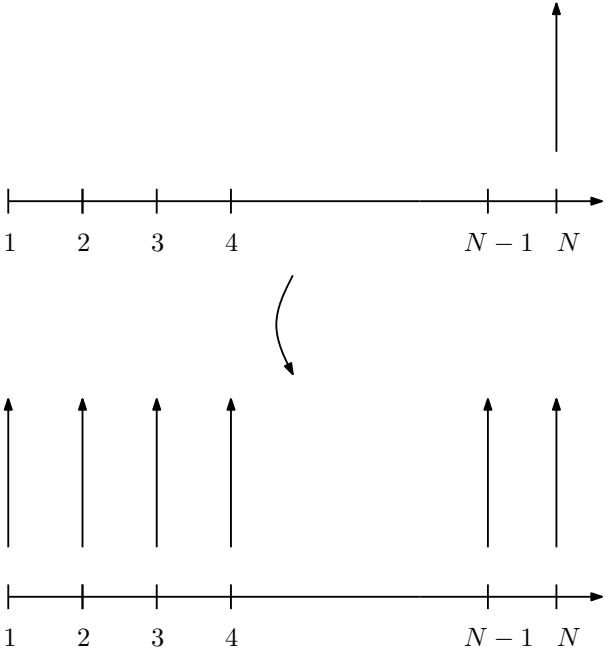}
\vskip 0cm \centerline{} \caption{\small \label{Cash} Equivalence of cash flows.}
\end{figure}

\smallskip

In Figure \ref{Cash}, it is possible to see that first we have a cash flow with different values along the years. Then, we have a new cash flow with equal deposits along the years. Now, the question is how to make this equivalence?

\smallskip

Let us consider a future deposit $F$ at time $N$  and a series of regular deposit $A$ from year $1$  to $N$. Then, to make the equivalence, it is necessary to consider the interest $i$  for each year. Therefore,
\begin{equation}\label{F}
F=A\sum_{k=1}^{N}(1+i)^{N-k}.
\end{equation}

Multiplying equation \eqref{F} by $(1+i)$, follows

\begin{equation}\label{F2}
(1+i)F=A\sum_{k=1}^{N}(1+i)^{1+N-k}.
\end{equation}

Now, computing the difference between equations \eqref{F2} and \eqref{F}, we have that 
\[
iF=A((1+i)^N-1).
\]

Since $F=P(1+i)^N$, we can conclude that
\begin{equation}\label{P}
\begin{array}{CCC}
P=\dfrac{A((1+i)^N-1)}{i(1+i)^N} & \Leftrightarrow & A=\dfrac{P i(1+i)^N}{(1+i)^N-1}
\end{array}
\end{equation}

As a consequence of equivalence \eqref{P}, we can take any present value and transforming in a series of regular cash flows. 

\smallskip

Consider $r$ the nominal interest, $M$ the number of period per year, and $i$ the real interest. It is possible to establish the equivalence

\begin{equation}\label{I}
i=\left(1+\dfrac{r}{M}\right)^M-1.
\end{equation}

Computing the limit of the expression \eqref{I}, when $M$ goes to $+\infty$, it follows that $i=e^r-1$. Hence, it can be establish the equivalences between two discount factors:
\[
e^{-rt}=\dfrac{1}{(1+i)^t}.
\]

\section*{Appendix B: Lambert $W$ Function}\label{B}

Let $z\in\C$ be any complex number and define $W(z)\subset\C$ as a set of complex numbers $w$, such that satisfies the equation
\[
z=we^{w}.
\]

W(z) is called {\it Lambert $W$ Function}. It is defined to be the multivalued inverse of the function $w\mapsto we^w$, for $w\in\C$.

If $z$ is real, then for $-1/e\geq z<0$ there are two real numbers in the set $W(z)$ (see Figure \ref{W}). We denote the branch satisfying $-1<W(z)$ by $W_0(z)$, or just $W(z)$ when there is no possibility for confusion, and the branch satisfying $W(z)\geq -1$ by $W_{-1}(z)$.

\begin{figure}[h]
\psfrag{z}{$z$}
\psfrag{W}{$W(z)$}
\psfrag{E}{$-1/e$}
\includegraphics[width=6cm]{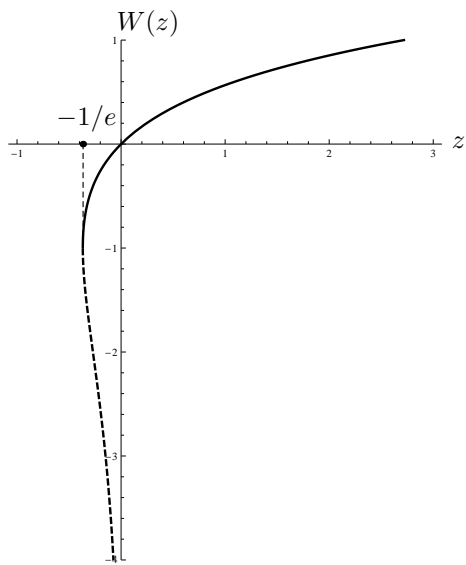}
\vskip 0cm \centerline{} \caption{\small \label{W} The two real branches for $W(z)$. -----, $W_0(z)$; - - -, $W_{-1}(z)$.}
\end{figure}

$W_0(z)$ is referred to as the {\it principal branch} of the $W$ function, and it is analytic at 0. Moreover, the series expansion for $W_0(z)$ is given by
\begin{equation}\label{Ws}
W_0(z)=\sum_{n=1}^{\infty}\dfrac{(-n)^{n-1}}{n!}z^n,
\end{equation}
which has the radius of convergence equal to $1/e$, i.e., the series \eqref{Ws} converges for all $z\in(-1/e,1/e)$.

If we consider the restricted function $W_0:(-1/e,\infty)\longrightarrow(-1,\infty)$, then $W_0$ becomes invertible with $W_0^{-1}:(-1,\infty)\longrightarrow(-1/e,\infty)$ being an increasing function. Indeed, $W_0^{-1}(w)=we^w$, for $w\in(-1,\infty)$.

For more detail on the Lambert $W$ function, see, for instance, \cite{CGHJK}.
 
\section*{Acknowledgements}
The first author is patially suported by an ANEEL grant PD-0064-1020/2010.
The second author is suported by a FAPESP grant 2012/10231-7.

\end{document}